\newtheorem{teorema}{Theorem}[section]
\newtheorem{definicion}[teorema]{Definition}
\newtheorem{proposicion}[teorema]{Proposition}
\newtheorem{corolario}[teorema]{Corollary}
\numberwithin{equation}{section}
\begin{document}

\begin{title}[Metrics of maximal acceleration]{An effective theory of metrics with maximal proper acceleration}
\end{title}
\maketitle
\author{

\maketitle
\begin{center}
\author{Ricardo Gallego Torrom\'e\footnote{email: rigato39@gmail.com}}
\end{center}
\bigskip
\address{Departamento de Matem\'atica, Universidade Federal de S\~ao Carlos, Rodovia Washington Lu\'is, km 235, SP, Brazil}
\begin{abstract}
{\small A geometric theory for spacetimes whose world lines associated with physical particles have an upper bound for the proper acceleration is developed.  After some fundamental remarks on the requirements that the classical dynamics for point particles should hold, the notion of generalized metric and a theory of maximal proper acceleration are introduced. A perturbative approach to metrics of maximal proper acceleration is discussed and we show how it provides  a consistent theory  where the associated Lorentzian metric corresponds to the limit when the maximal proper acceleration goes to infinity.  Then several of the physical and kinematical properties of the maximal acceleration metric are investigated, including a discussion of the rudiments of the causal
theory and the introduction of the notions of radar distance and celerity function. We discuss the corresponding modification of the Einstein mass-energy relation when the associated Lorentzian geometry is flat. In such context it  is also proved that the physical dispersion relation is relativistic. Two possible physical scenarios where the modified mass-energy relation could be confronted against experiment are briefly discussed.}
\end{abstract}

\section{Introduction}

The hypothesis of maximal proper acceleration was first discussed by E. Caianiello \cite{Caianiello}
in the context of a geometric approach to the foundations of the quantum theory \cite{Caianielloquantum}. As a consistence requirement for the positiveness in the mass spectra of quantum particles and the existence of a maximal speed, Caianiello found a positiveness condition for the Sasaki-type metric in the phase space description of quantum mechanics. Such condition leaded to the existence of a maximal proper acceleration depending on the mass of the particle. In classical models of gravity, the consequences of the existence of a maximal proper acceleration have been studied extensively. Let us mention
for instance the investigation of maximal proper acceleration for Rindler spaces \cite{CaianielloFeoliGasperiniScarpetta},
Schwarzschild \cite{FeoliLambiasePapiniScarpetta}, Reissner-Nordst$\ddot{\textrm{o}}$m \cite{BozzaFeoliPapiniScarpetta},
Kerr-Newman \cite{BozzaFeoliLambiasePapiniScarpetta} and Friedman-Lema\^{i}tre metrics
\cite{CaianielloGasperiniScarpetta}, among other investigations.

In classical electrodynamics,
there are also several scenarios,  related with the problem of radiation reaction, where the notion of a bound for the proper acceleration emerges. We can mention two examples. First, in the Lorentz's  model of the electron, the coordinate acceleration is bounded by a maximal value, in order to preserve causality  (see reference \cite{Spohn2} for a modern introduction to the Lorentz's model). The second example is the extended model of the electron proposed by P. Caldirola \cite{Caldirola1956}, where a maximal proper acceleration appears as a consequence of the existence of a maximal speed of interaction  and a minimal unit of time ({\it chronon}) \cite{Caldirola1981}.

The above mentioned maximal accelerations depend on the mass of the particle. However, more interesting for us is the appearance of {\it universal maximal acceleration} in different theories of quantum gravity. Thus, as early as in \cite{Brandt1983} it was discussed the idea of
maximal proper acceleration and its relation with Sakharov's maximal temperature
\cite{Sakharov}, while Parentani and Potting  investigated the consequences of a high temperature bath of free strings in the framework of string field theory in vacuum \cite{ParentaniPotting} and Bowick and Giddins studied related issues for interacting strings \cite{BowickGiddins}. They showed, as a consequence of the equivalence principle, the existence of a maximal acceleration in relation with  {\it Hagedorn's temperature}. In this context, at the Hagedorn's temperature  the strings break and become unstable. More recently it has been shown that a maximal universal acceleration emerges as a consequence of the discreteness in the spectra of the spacetime coordinate operators
 in covariant loop quantum gravity models \cite{RovelliVidotto}. Therefore, we can see that there are dynamical arguments, based on different theories of quantum gravity, for the existence of an universal maximal proper acceleration. Moreover, the maximal proper accelerations that appear in string theory and in loop quantum gravity are of the same order of magnitude and independent of the  quantum object mass.

 The appearance  of maximal proper acceleration in different theories motivates the search for classical, geometric frameworks for {\it metrics of maximal proper acceleration}. Otherwise we are confronted with the situation that universal dynamics are constrained by a maximal proper acceleration, while the corresponding kinematic theories (in this case special relativity or a Lorentzian geometry background) are not constrained, with the risk of a potential contradiction. That is,
 for any value of the possible maximal acceleration one can find a theoretical classical spacetime that violates the dynamical limit by increasing the mass and the charge of a charged black hole. This can happen for instance if the event horizon has the property that the value of the proper acceleration along the world line of a massive charged particle is higher than the maximal acceleration for a particle close enough to the horizon.

 In a previous work by the author, a covariant theory for metrics of maximal acceleration was developed \cite{GallegoTorrome}. Although that theory was motivated by the non-covariance  of Caianiello's
 quantum geometry, it was independent of the mechanism generating the maximal proper acceleration $A_{max}$ and could be applied to any classical theories where a  maximal proper acceleration appears.
 However, the theory developed in \cite{GallegoTorrome} made use from the beginning of a Lorentzian metric $\eta$ defined on $M$. Therefore, it cannot be the final formulation of the kinematics of maximal acceleration, since there are defined  two different metric structures for the same physical spacetime, namely, the metric of maximal proper acceleration and the Lorentzian metric $\eta$. Since these structures are not geometrically equivalent,  a selection must be done to decide which is the geometric structure describing the physical spacetime. The problem is that there is no natural selection criteria in Caianiello's theory or in \cite{GallegoTorrome}.

  In this paper we construct a classical, effective kinematic theory for metrics of maximal acceleration, valid for small accelerations compared with the maximal acceleration.  A new geometric object (the metric of maximal acceleration $g$), associated to the proper time measured by co-moving observers attached to world lines of classical point particles, is introduced. This is a {\it generalized higher order tensor}, whose components live on the lift to the second jet (therefore, depend on the position, tangent velocity and acceleration tangent vectors) of the world line along which the proper time is being evaluated \cite{Ricardo012}.
The main assumption adopted in this work is that the physical proper time  $\tau[x]$ experienced by an  ideal clock co-moving with a physical point particle which world line is $x:I\to M$ and where with $I\subset\,\mathbb{R}$ is the proper time associated with $g$.

\subsection*{Structure of the paper}
In this work we develop in detail the construction of a theory of metrics with maximal proper acceleration from natural assumptions that every
classical dynamics of point particles must satisfy. These considerations for classical systems
are discussed in {\it section 2}. We  critically review the standard foundations of the geometric structure of the
spacetime based on a Lorentzian geometry. In order to fix the geometric structure of the theory, we adopt the hypothesis of the existence of a metric of maximal acceleration as an alternative to the {\it clock hypothesis} \cite{Einstein1922,Rindler}. In {\it section 3}, the metric of maximal acceleration $g$ is
introduced by means of generalized higher order fields \cite{Ricardo012}. $g$ is  local Lorentz invariant and consistent with the requirement of maximal proper acceleration. We first construct the metric $g$ by means of a background Lorentzian structure $(M,\eta)$.  Then
the proper time parameter associated with the metric of maximal acceleration is defined. After this preliminary approach, the construction of the metric of maximal acceleration from first principles is considered. The auxiliary  Lorentzian metric $\eta$ is defined as the limit of the metric of maximal acceleration when the maximal proper acceleration goes to infinity.
 In {\it section 4}, the rudiments of the causal theory for metrics of maximal acceleration are discussed. In {\it section 5} we consider the definition of {\it radar distance} for a metric of maximal acceleration and the corresponding notion of {\it celerity}.
Since the notion of proper time for a metric of maximal acceleration is different than in the Lorentzian case, the
 notions of celerity and four-velocity vector are different than the corresponding notions in the relativistic case. In {\it section 6}, the four-momentum $4$-vector is considered for
metrics of maximal acceleration. In particular, we study the case when the metric $\eta$ is the Minkowski metric $h$. Then it is shown that the relativistic dispersion relation still holds, at the order of approximation that we are considering in the theory. We will also show that the Einstein energy-mass relation is modified by the existence of a maximal proper acceleration. We briefly discuss possible phenomenological scenarios where the modified Einstein energy-mass relation can be tested for different theories of maximal acceleration.

We did not consider a particular mechanism producing the maximal proper acceleration in this paper. However, we assumed that the origin of the maximal acceleration is a fundamental discreteness of the spacetime. Under such assumption, we investigate a general geometric formalism consistent with an universal maximal proper acceleration $A_{max}$  in the domain where the proper accelerations of point particles are  small compared with the maximal proper acceleration. The problem of the uniqueness of the geometric formalism leaves open. However, there are reasons to believe that, as happens in the case of special relativity (see for instance \cite{Ignatowsky, Liberati2013} for modern introduction to von Ignatowsky theory), a theory with maximal acceleration and maximal speed is  partially fixed when pre-causality, isotropy and homogeneity conditions for the spacetime are also imposed. Nevertheless, our theory also provides an example, in the flat case, of a Lorentz invariant theory which is not the Minkowski spacetime.

\section{General assumptions for the classical dynamics of point particles}

 In this work the {\it spacetime manifold} is a $4$-dimensional, smooth manifold $M$. Non-interacting fundamental physical systems are described by parameterized, smooth curves $x:I \to M$.
Not every parameterized, smooth curve can be associated with a physical point particle. Hence the special curves that serve to describe physical particles will be called {\it world lines}. Taken as a guaranteed  that world lines exit, we aim to characterize them in a form as complete as possible from a minimal set of assumptions on their analytical and geometrical properties. However, we should keep in mind that such description could only be effective; if the spacetime is  discrete, there are no smooth curves defined on it.

 Physical fields are measured by observing their interaction on test particles. Hence they should be mathematically described by {\it forms or tensors along  world lines}. Thus, given a world line $x:I\to M$, an observable field $F$ is a map $F:I\to \mathcal{E}$ where $\pi_{\mathcal{E}}:\mathcal{E}\to M$ is a given bundle over $M$ and $I\subset\mathbb{R}$ such that the diagram
\begin{align}
\xymatrix{ &
{\mathcal{E}} \ar[d]^{\pi_{\mathcal{E}}}\\
{ I} \ar[ur]^{F}  \ar[r]^{x} & { M}.}
\end{align}
commutes. A classical interaction between  particles  corresponds to an intersection of the corresponding world lines.

Systems composed by world lines that do not intersect  have an intrinsic significance because of their
geometric simplicity and motivates the following {\it definition},
\begin{definicion}
{\it An inertial coordinate system} $(U,\varphi_I)$ with $U\subset M$ open sub-set and $ \varphi_I:M\to \mathbb{R}^4$ an homeomorphism is a $\mathcal{C}^k$-smooth with $k\geq 2$ coordinate chart on $M$ such that the world line of any classical non-interacting point particle is described by a parameterized  straight line ${x}:\mathbb{R}\to \mathbb{R}^4,\,t\mapsto
v^\mu t+a^\mu$, with $v^\mu,a^\mu$ constants in $t\in\,\mathbb{R}$.
\label{principle of inertia}
\end{definicion}
The existence of inertial coordinate systems is not a trivial requirement for an affine manifold, in the sense that it is not in conflict with  Whitehead's  theorem on the existence of small convex neighborhoods on any manifold equipped with an affine, symmetric connection \cite{Whitehead1932}.
 Given a particular spacetime manifold $M$ it can happen that there is not such inertial charts in the atlas structure of $M$. In such case, inertial coordinate systems are not  realized physically. This is the case for a generic curved spacetime. However, {\it Definition} \ref{principle of inertia} is not empty of physical content and indeed it is very useful in the restriction of  the metric structure on the spacetime.
 For instance, in the case of a Finslerian spacetime $(M,L)$, where $L$ is the Finsler function in the sense of Beem \cite{Beem1970}, the existence of inertial coordinate systems implies that it must be a Minkowski space in the Finslerian sense \cite{BCS}, with all the Finslerian curvatures equal to zero. {\it Definition} \ref{principle of inertia} is also useful to construct the {\it second law of dynamics} in terms of second order differential equations.

 For each point of the world line $x(t_0)$ and for every $\alpha >0$ independent of $t$, there is a $\delta(t)\geq 0$ such that the difference between the first jet approximation to the coordinate system $x^\mu(t)$ and $x^\mu(t)$ is bounded by $\alpha$. Therefore, it is useful in the case when there is not inertial coordinate systems in the atlas of $M$,
 \begin{definicion}
 An instantaneous inertial coordinate system $(\bar{U},\bar{\varphi}_I)$ respect to a world line $x:\mathbb{R}\to \mathbb{R}^4$ is an inertial coordinate system such that for a fixed $t_0\in \mathbb{R}$, a point particle whose speed is $x'(t)$ at the instant $t$ will have a coordinate line a parameterized straight line.
 \label{definicionofinstantaneouscoordinatesystem}
 \end{definicion}

There is arbitrariness in the choice of the parameter of the world line for a given point particle.  This is related with the dynamical description of the physical particle, since
the choice of the parameter in the description of the world line  can change qualitatively the mathematical properties of the equation describing the dynamics. It is natural to choose the following type of parameter,
\begin{definicion}
Given the world line ${x}:I\to M$ a {\it proper time parameter}
$\tau$ of the world line is a real parameter  such that $x(\tau)$ is a solution of a local, second order differential equation respect to $\tau$.
\label{definitionofphysicalclock}
\end{definicion}
{\it Definition} \ref{principle of inertia} and {\it Definition} \ref{definitionofphysicalclock} are consistent in the sense that the parameter of a straight  world line coordinate representation of a non-interacting point particle is a proper time parameter.
\subsection*{Interactions}
The action of an external system on a physical point particle is such that only the intersections with the same
coincident point contribute to perturbation from free motion of the particle at that point. Therefore, the theory that will develop could be applied to classical interactions that can be reduced to contact interaction of classical particles, represented by objective histories of events in a classical spacetime. Such type of {\it coincidence theories} contains, for instance, Einstein's general relativity. Hence the class is remarkably interesting.

In the intersection of two world lines there is not well defined tangent vector. For our theory this is not a relevant problem,  since we consider the smooth differential geometric description as an effective description of the truly discrete geometric description of the fundamental dynamics
and that the classical description heritages several characteristics from the discreteness of the fundamental arena. In particular, we assume that the following principles hold,
\begin{definicion}{\it Principles of local interactions and maximal speed},
\bigskip
\begin{enumerate}
 \item {\bf Principle of Locality}. The action of an external system on a point particle is local and the elementary change in the inertial coordinates due to any interaction is uniformly bounded by below by an universal scale $L_{min}$,
     \begin{align}
     \delta x^\mu \,\geq L_{min}>0.
     \end{align}
\item {\bf Maximal speed}. There exists a {\it maximal speed} for physical point particles and local interactions. This speed is independent of the observer and it is assumed to be the speed of light in the vacuum $c$.
\end{enumerate}
\label{definicionofirsthypothesis}
\end{definicion}
The notion of speed and coordinates are referred to the effective geometric description of the spacetime. That is, the fundamental discrete geometry must be such that the effective geometric description, the principles in {\it Definition} \ref{definicionofirsthypothesis} holds good.
\subsection*{The clock hypothesis}
In the generalization from inertial coordinate systems to arbitrary coordinate systems in the description of the motion of point particles, it is necessary to assume an additional hypothesis on the characteristics of ideal co-moving clocks and rods. In this context,
 the {\it clock hypothesis} can be formulated as follows (see for instance \cite{Rindler}, p. 65),
\bigskip
\\
{\it To each physical world line $x:I\to M$, there are associated {\it ideal co-moving clocks} that are completely
unaffected by acceleration; that is, clocks whose instantaneous rate depends only
on its instantaneous speed in accordance with the time dilatation formula of special relativity. Thus, one can adopt such
 clocks as the co-moving proper clocks.}
\bigskip
\\
Einstein's formulation of the clock hypothesis was done implicitly (see \cite{Einstein1922}, p. 64 footnote),
\bigskip
\\
{\it These considerations assume that the behaviour of rods and clocks depends only upon velocities, and not upon
accelerations, or, at least, that the influence of acceleration does not counteract that of velocity.}
\bigskip
\\
The clock hypothesis allows to abstract to negligible the effects of acceleration on ideal rods and clocks at each point of a given
world line $x:I\to M$, to reduce the ideal co-moving clock and ideal co-moving rod to  smooth  families of special
relativistic clocks and rods along the world line, respectively. From this assumption it follows that the metric structure of a theory where the clock hypothesis holds is based upon a Finslerian structure\footnote{In special relativity theory this is a well known fact \cite{Syngespecial1972}. The Lorentzian case emerges under the additional hypothesis that the rods determine an  Euclidean rule to measure spatial distances.}.

There are two main assumptions beneath the clock hypothesis. The first is the assumption of the existence of
ideal clocks with the characteristics described above, in particular that instantaneous co-moving clocks and rods are special
relativistic and do not depend upon acceleration. Several authors have pointed out that such assumption is un-physical  in relevant  scenarios \cite{Mashhoon1990, MashhoonMuench2002}. In particular, the clock hypothesis (or the weaker {\it Hypothesis of locality in relativistic physics} \cite{Mashhoon1990}) is applicable when the influence of inertial effects can be neglected over the length  and time scales characteristic of the local frame observers. Thus, the existence of {\it intrinsic scales} of time and length, as opposed to the exactness of the pointlike description and coincidence theory  leads to the possibility of violations of the hypothesis of locality. This is of special significance for classical electrodynamics. For a charged particle of mass $m$ and charge $q$, the intrinsic scale of time where radiation reaction processes are relevant coincides with the scale where the changes in the motion produced by an exterior field are appreciably large compared with the characteristic length time of order
$T=\,\frac{2}{3}\,\frac{q^2}{mc^3}$. This characteristic time scale corresponds to the  time that a light ray will expend crossing the classical radius of the electron. Significatively, the classical radius of the electron is not directly linked with the real size of the point electron (that by definition is zero), but with the time scale that minimal observable changes are small enough that radiation-reaction effects become relevant \cite{Jackson}.

 The second fundamental idea, this time clearly stated in the formulation of the clock hypothesis is explicitly contained in the sentence ''{\it Thus, one can adopt such
 clocks as the co-moving proper clocks}". Therefore, even for the physical situations where the clock hypothesis is a
 reasonable assumption, the adoption of such clocks is justified by mathematical convenience and not by a logical
 consistence requirement. This clearly suggests  that the clock hypothesis could be substituted by a more general condition.

\subsection*{A general argument for the existence of a  maximal proper acceleration}
We have already briefly mentioned some of the limitations of the clock hypothesis, in particular when radiation reaction effects are
important for the dynamics of the point particle.
If we do not make use of the clock hypothesis, an additional constraint is required to determine the rate of change of
arbitrarily moving clocks and the rules for moving rods.  We choose to impose compatibility with the principle of local
dynamical interactions and maximal speed. In a similar way as the clock hypothesis constrained the geometric structure to be Finslerian type, we will see that the hypotheses of local interactions and maximal speed constraints the geometric structure of the spacetime to be a {\it spacetime with a maximal proper acceleration}, a mathematical structure that we will define in the next {\it section}.

Indeed, the requirements of locality in the interaction of point particles and the existence of a
maximal speed imply the existence of a maximal proper acceleration, as the following heuristic argument shows. Let us consider a point particle interacting with an external system that we could represent by an extended exterior media.
In the instantaneously inertial coordinate system attached to the world line of the point particle at the point $x(t_0)$, the mechanical work on the particle is given by an expression of the form
 \begin{align*}
 \mathcal{W}=\, m \,\delta\vec{\bf v}^2,
 \end{align*}
 where $\delta\vec{\bf v}^2$ measures the infinitesimal increase in the square of the speed of the particle respect to such inertial coordinate system.
Respect to the instantaneous rest frame of the particle, the work transmitted is constrained by an expression of the form
\begin{align*}
\mathcal{W}=L \,m \,a ,
\end{align*}
where $a$ is the value of the acceleration along the direction  the action is taking place and $L$ is the  displacement
(which is assumed to be small compared with macroscopic scales) along the direction of the action. We assume that $a$ and $L$ are
parallel and that $L$ is non-zero. By the hypothesis of locality the displacement $L$ is lower bounded by $L_{min}$, the displacement associated with a discrete, infinitesimal action, which is unique, universal and different from zero. If $c$ is the maximal speed for matter and interaction (the speed of light in vacuum), then necessarily $\delta \vec{\bf v}^2\leq c^2$.
Thus, there is a bound for the proper acceleration of the form
\begin{align}
a\leq\,a_{max}\,= \frac{c^2}{L_{min}}.
\label{maximalacceleration}
\end{align}
Therefore, as a consequence of the existence of a maximal speed for interactions and discreteness of the fundamental physical arena, an upper bound for proper acceleration arises. This consequence is of significance for theories of quantum gravity.

 A main difficulty is to find a consistent classical geometry theory compatible with maximal acceleration. The theory developed in the next {\it section} provides an effective, geometric framework for metrics of maximal acceleration when the acceleration is much smaller than the maximal acceleration.

\section{Covariant formulation of the metric of maximal acceleration}
  Let us consider a Lorentzian structure $(M,\eta)$, the associated Levi-Civita connection $D$ and the corresponding covariant
  derivative operator along $x:I\to M$. A natural way to construct a  metric with  maximal proper acceleration is by considering
  first a geometric structure on the second tangent bundle $\hat{\pi}_2:TTM\to TM$ \cite{Brandt1983, Caianielloquantum, Caianiello}.  Given
$(M,\eta)$ there is defined a Sasaki-type metric on $TM$,
\begin{align}
g_S= \,\eta_{\mu \nu}\, dx^\mu \otimes dx^\nu + \frac{1}{A^2_{max}}\eta_{\mu \nu}\,\Big( {\delta 
y^{\mu}}\otimes {\delta y^{\nu}}\Big),
\label{sasakitypemetric}
\end{align}
where $\{\delta y^\mu\}^4_{\mu=1}$ determine a covariant, local vertical distribution on $T^*TM$ dual to the corresponding canonical local vertical distribution on $TTM$. The forms $\delta y^\mu$ are  constructed from the vertical distribution and a non-linear connection \cite{GallegoTorrome}.

However, the geometry of spacetimes with a maximal acceleration  is more naturally described by a generalised tensor \cite{Ricardo012}. This is because the mathematical description of a metric of maximal acceleration is a geometric structure whose components live on the second jet bundle
\begin{align*}
J^2_0(M):=\{(x,x',x''),\,x:I\to M \,\textrm{smooth},\, 0\in I\},
 \end{align*}
 where the  coordinates of a given point $u\in\,J^2_0(M)$ are of the general form
 \begin{align*}
 (x,x',x'')=\,\left( x^\mu(t),\frac{dx^\mu(t)}{dt},\frac{d^2 x^\mu(t)}{dt^2}\right) ,\quad \mu=1,2,3,4.
  \end{align*}

  Let $D:\Gamma TM \times\to \Gamma TM$ be the covariant derivative associated to the Levi-Civita connection of $\eta$.
\begin{proposicion}
Let $(M,\eta)$ be a Lorentzian structure and  $x:I\to { M}$ be a smooth curve,  $T=(x',x'')$  tangent vector along the lift $x^1:I\to J^1_0(M)$, $t\mapsto(x(t),x'(t))\in \,J^1_0(M)\simeq T_{x(t)}M$
and such that $\eta(x',x')\neq 0$ holds.
Then there is a non-degenerate, symmetric form $g$ along $x:I\to M$ such that acting on the tangent vector $x'(t)$
has the value
\begin{align}
g(x(\tau)) (x',x')=\Big(1+ \frac{  \eta(D_{x'}x'(\tau),
D_{x'}x'(\tau))}{A^2 _{max}\,\eta(x',x')}\Big)\,\eta(x',x'),
\label{maximalaccelerationmetric0}
\end{align}
\label{teoremasobremaximaacceleration}
\end{proposicion}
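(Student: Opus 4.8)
The plan is to write down the form $g$ explicitly and then read off its properties; since formula \eqref{maximalaccelerationmetric0} is essentially a prescription for $g(x',x')$, the substance of the argument is (i) producing a genuine $(0,2)$-type generalized tensor along $x$ whose restriction to $x'$ has the stated value and (ii) checking non-degeneracy. The natural construction comes from pulling back the Sasaki-type metric \eqref{sasakitypemetric} along the canonical lift $x^1:I\to J^1_0(M)\simeq TM$, $t\mapsto(x(t),x'(t))$. Along this lift one has $dx^\mu\mapsto x'^\mu\,dt$ and, because the covariant vertical forms are $\delta y^\mu=dy^\mu+\Gamma^\mu_{\nu\rho}(x)\,y^\rho\,dx^\nu$ for the Levi-Civita nonlinear connection, also $\delta y^\mu\mapsto\big(\ddot x^\mu+\Gamma^\mu_{\nu\rho}\dot x^\nu\dot x^\rho\big)\,dt=(D_{x'}x')^\mu\,dt$. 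Hence the pull-back of $g_S$ along $x^1$ is the line element $\big(\eta(x',x')+A_{max}^{-2}\,\eta(D_{x'}x',D_{x'}x')\big)\,dt\otimes dt$, whose value on the lifted tangent $T=(x',x'')$ is precisely $g_S(T,T)$.

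To turn this scalar into a form along $x$, I would divide by $\eta(x',x')$ — legitimate by the hypothesis $\eta(x',x')\neq0$ — and promote the resulting factor to a conformal rescaling of $\eta$, setting
\begin{align}
g:=\Omega^2\,\eta,\qquad
\Omega^2:=1+\frac{\eta(D_{x'}x',D_{x'}x')}{A_{max}^2\,\eta(x',x')},
\label{conformaldefinitionofg}
\end{align}
regarded as a section over the open subset of $J^2_0(M)$ on which $\eta(x',x')\neq0$. Its components $g_{\mu\nu}=\Omega^2\eta_{\mu\nu}$ depend on $(x,x',x'')$ through $D_{x'}x'$, so $g$ is a generalized higher order tensor in the sense of \cite{Ricardo012}; it is symmetric because $\eta$ is and $\Omega^2$ is a scalar; and evaluating \eqref{conformaldefinitionofg} on $x'$ returns exactly \eqref{maximalaccelerationmetric0}. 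One also sees at once that $\Omega^2\to 1$ as $A_{max}\to\infty$, so that $g$ reduces to $\eta$ in that limit, and that $g$ is local Lorentz invariant, since $\Omega^2$ is built only from $\eta$-contractions and the Levi-Civita connection of $\eta$.

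The step that requires genuine care is non-degeneracy. Since $\eta$ is non-degenerate, $g=\Omega^2\eta$ is non-degenerate exactly when $\Omega^2\neq0$, and it carries the same Lorentzian signature as $\eta$ when $\Omega^2>0$; but $\Omega^2$ vanishes when $\eta(D_{x'}x',D_{x'}x')=-A_{max}^2\,\eta(x',x')$, i.e. when the proper acceleration attains the critical value $A_{max}$ (for a timelike $x'$ with $D_{x'}x'$ orthogonal to $x'$, where $\eta(D_{x'}x',D_{x'}x')$ and $\eta(x',x')$ have opposite signs). I would therefore state the conclusion over the open set where $A_{max}^2\,\eta(x',x')+\eta(D_{x'}x',D_{x'}x')\neq0$ — the physically relevant regime of the effective theory, namely proper accelerations small compared with $A_{max}$ — on which $\Omega^2>0$ and $g$ is a bona fide generalized Lorentzian metric along $x$. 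This is the only point where the bare hypothesis $\eta(x',x')\neq0$ does not by itself suffice, and the heuristic bound \eqref{maximalacceleration} is precisely the statement that this extra open condition holds throughout the domain of validity of the construction.
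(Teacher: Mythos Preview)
Your argument is essentially the paper's own: compute $g_S(T,T)$ on the lifted tangent $T=(x',x'')$ using the Levi-Civita nonlinear connection, factor out $\eta(x',x')$, and define $g$ as the conformal rescaling $g_{\mu\nu}=\Omega^2\eta_{\mu\nu}$ with $\Omega^2=1+\eta(D_{x'}x',D_{x'}x')/(A_{max}^2\,\eta(x',x'))$. The only real difference is in the treatment of non-degeneracy. The paper simply asserts that non-degeneracy and symmetry ``follow from the analogous properties of the Sasaki-type metric $g_S$,'' which is loose: non-degeneracy of $g_S$ on $TTM$ does not by itself force $\Omega^2\neq0$ on $TM$. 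Your analysis is sharper here---you correctly isolate the open condition $\Omega^2\neq0$ (equivalently, proper acceleration below $A_{max}$) as what is actually needed, and you tie it to the effective regime the paper works in anyway. So your proof is the paper's proof with a more honest accounting of where the hypotheses are used.
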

\begin{proof} The tangent vector at the point $(x(t),x'(t))=\,^1x(t)\in J^1_0(M)$
is $(x',x'')\in \,J^2_0(M)$. The value of the  metric $g_S$ acting on the vector field $T=(x',x'')\in\,T_{(x(t),x'(t))}N$ is
\begin{align*}
g_S(T,T)& =\,\Big( \eta_{\mu \nu}\, dx^\mu \otimes dx^\nu + \frac{1}{A^2_{max}}\eta_{\mu \nu}\big( {\delta 
y^{\mu}}\otimes {\delta y^{\nu}}\big)\Big)\,\big(T,T\big)\\
& =\,\Big( \eta_{\mu \nu}\, x'^\mu x'^\nu + \frac{1}{A^2_{max}}
\eta_{\mu \nu}\big(x''^{\mu}\,-N^{\mu}\,_{\rho}(x,x')\,x'^{\rho}\big) \,
\big(x''^{\nu}\,-N^{\nu}\,_{\lambda}(x,x')\,x'^{\lambda}\big)\Big),
\end{align*}
where $N^\mu\,_{\rho}=\,\Gamma^\mu\,_{\nu\rho}(x')^\nu$ is given in terms of the Christoffel symbols of the Levi-Civita of $\eta$.
Since $\eta(x',x')\neq 0$, one has
\begin{align*}
g_S(T,T)=\,\Big( 1 + \frac{1}{A^2_{max}\,\eta(x',x')}
\eta((D_{x'}x'),(D_{x'}\,x'))\Big)\eta(x',x'),
\end{align*}
that coincides with \eqref{maximalaccelerationmetric0} if we stipulate that
\begin{align}
g(x(\tau)) (x',x'):=\,g_S(T,T).
\end{align}
The properties of non-degeneracy and symmetry follow from the
analogous properties of the Sasaki-type metric $g_S$.
The extension of the action of $g$ to arbitrary vectors if we consider the bilinear form
\begin{align*}
g(\,^2x):=\,g_{\mu\nu}(\,^2x)\,dx^\mu\otimes dx^\nu,
\end{align*}
 with $g_{\mu\nu}(\,^2x)$ given by the expression
\begin{align}
g_{\mu\nu}(\,^2x):=\,\Big(1+ \frac{  \eta(D_{x'}x'(t),
D_{x'}x'(t))}{A^2 _{max}\,\eta(x',x')}\Big)\eta_{\mu\nu},\quad \mu,\nu=1,...,4.
\label{maximalaccelerationmetric}
\end{align}
\end{proof}
The bilinear form $g$ determined by the components \eqref{maximalaccelerationmetric} in an arbitrary coordinate system is the {\it metric of maximal acceleration}. Its action on two arbitrary vector fields $W,Q$ along $x:I\to M$
is given by
\begin{align*}
g(W,Q)=\,\Big(1+ \frac{  \eta(D_{x'}x'(t),
D_{x'}x'(t))}{A^2 _{max}\,\eta(x',x')}\Big)\eta(W, Q)
\end{align*}
and is such that when $W=Q=\,x'(t)$, the expression \eqref{maximalaccelerationmetric0} is recovered. Note that $g$ is not bilinear on the {\it base point} $x'(t)$ but it is bilinear on the vector arguments $W$, $Q$.
 \begin{corolario} Let $x:I \to M$ be a smooth curve such that
 \begin{itemize}
 \item It holds that
 $g(x',x')<0$ and  $\eta(x',x')<0$,
\item The covariant condition
\begin{align}
\eta(D_{x'}\,x',\,D_{x'}\,x')\,\geq 0.
\label{spacelikeaccelerations}
\end{align}
holds good.
 \end{itemize}
Then the bound
 \begin{align}
  0 \leq \eta(D_{x'}x',D_{x'}x')\leq \,A^2_{max}
  \label{boundedconditionforacceleration}
 \end{align}
 holds good.
 \end{corolario}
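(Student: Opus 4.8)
The plan is to read off the lower bound directly from the hypotheses and to extract the upper bound from a sign analysis of the identity \eqref{maximalaccelerationmetric0}. The left inequality in \eqref{boundedconditionforacceleration}, namely $0\le \eta(D_{x'}x',D_{x'}x')$, is nothing but the assumed covariant condition \eqref{spacelikeaccelerations}, so there is nothing to prove there; the actual content of the statement is the upper bound.

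For the upper bound I would first rewrite \eqref{maximalaccelerationmetric0} in the equivalent additive form
\begin{align*}
g(x',x')=\eta(x',x')+\frac{1}{A^2_{max}}\,\eta(D_{x'}x',D_{x'}x'),
\end{align*}
so that $A^{-2}_{max}\,\eta(D_{x'}x',D_{x'}x')=g(x',x')-\eta(x',x')$. Since by hypothesis $g(x',x')<0$ and $-\eta(x',x')=|\eta(x',x')|>0$, the right-hand side is strictly less than $|\eta(x',x')|$, whence $\eta(D_{x'}x',D_{x'}x')<A^2_{max}\,|\eta(x',x')|$. The same conclusion follows from the product form \eqref{maximalaccelerationmetric0} itself: $g(x',x')<0$ and $\eta(x',x')<0$ force the bracketed scalar factor $1+\eta(D_{x'}x',D_{x'}x')/(A^2_{max}\,\eta(x',x'))$ to be strictly positive, and — using \eqref{spacelikeaccelerations}, $A^2_{max}>0$ and $\eta(x',x')<0$ — that factor is moreover at most $1$; imposing positivity and clearing the negative denominator $A^2_{max}\,\eta(x',x')$, which reverses the inequality, gives the same bound.

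It then remains only to pass from $\eta(D_{x'}x',D_{x'}x')<A^2_{max}\,|\eta(x',x')|$ to \eqref{boundedconditionforacceleration}, and this is the one place that needs care: one uses the proper-time normalization of the world line, for which $|\eta(x',x')|\le 1$ in natural units (equivalently $|\eta(x',x')|\le c^2$ once the constant is reinstated), so that $\eta(D_{x'}x',D_{x'}x')\le A^2_{max}$; together with the lower bound this is exactly \eqref{boundedconditionforacceleration}. I expect no further obstacle: the two sign hypotheses $g(x',x')<0$ and $\eta(x',x')<0$ are present precisely to legitimize dividing by $\eta(x',x')$ and carrying out the sign reversal, and keeping track of which parametrization fixes $|\eta(x',x')|$ is the only subtlety; the rest is a one-line manipulation of \eqref{maximalaccelerationmetric0}.
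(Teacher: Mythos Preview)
Your argument is correct and coincides with the paper's (implicit) reasoning: the corollary is stated there without proof, as an immediate consequence of the sign analysis of \eqref{maximalaccelerationmetric0}, and the paper likewise invokes the proper-time normalization $\eta(x',x')=-1$ (made explicit just after the corollary, in the passage leading to \eqref{boundproperacceleration} and \eqref{propertimeg}). One minor remark: under that normalization one has $|\eta(x',x')|=1$ exactly, not merely $\le 1$, so your final step is an equality of scales rather than an additional estimate; otherwise the structure --- lower bound from the hypothesis \eqref{spacelikeaccelerations}, upper bound from positivity of the bracketed factor --- is precisely what the paper has in mind.
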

It is now possible to specify in which sense there is a maximal proper acceleration respect to $\eta$.
 For each point $x(\tau)$ in  the image of a physical world line $x(I)\hookrightarrow M$ and  for any
 instantaneously at rest coordinate system at the point $x(t)\in \, M$, the
 proper acceleration $D_{x'}x'$ along the world line $x:I\to M$ at $x(t)$  is bounded as indicated by the relation \eqref{boundedconditionforacceleration}. The bound does not depend on the curve. The minimum of these bounds is the maximal proper acceleration
 $A_{max}$. A direct consequence is that the relation
\begin{align}
\eta(x'',x'')\leq A^2_{max}
\label{boundproperacceleration}
\end{align}
holds good in any Fermi coordinate system of $D$ along $x:I\to M$.
\begin{definicion}
A curve of maximal proper acceleration is a map $x:I\to M$ such that
\begin{align}
\eta(D_{x'}x',D_{x'}x')=A^2_{max}.
\label{curveofmaximalacceleration}
\end{align}
\end{definicion}
Note that the proper parameter in the expressions \eqref{boundproperacceleration} and \eqref{curveofmaximalacceleration} is the proper time associated to the metric $\eta$. The proper time parameter associated to $g$ is introduced below.
\subsection{Proper time parameter associated to the metric of maximal acceleration}
We define the proper time associated with $g$ along the world line  $x:I\to M$ with
$\eta(x',x')<0$  by the expression
\begin{align}
\tau[t]=\int^t_{t_0} \,\Big[\Big(1+ \frac{  \eta(D_{x'}x'(s),
D_{x'}x'(s))}{A^2 _{max}\,\eta(x',x')}\Big)(-\eta(x',x'))\Big]^{\frac{1}{2}}\,ds,
\label{propertimeg0}
\end{align}
where $t_0$ is fixed.
Since this expression for the proper parameter $\tau$ is not re-parameterization invariant, we need to fix the parameter
$s$ in a natural way. We choose the parameter $s$ to be the proper time  of $\eta$. Thus we the condition
\begin{align*}
\eta(x',x')=-1
\end{align*}
 holds good.
Hence the expression for the proper time of $g$ is
\begin{align}
\tau[t]=\int^t_{t_0} \,\Big[1- \frac{  \eta(D_{x'}x'(s),
D_{x'}x'(s))}{A^2 _{max}}\Big]^{\frac{1}{2}}\,ds.
\label{propertimeg}
\end{align}
As a consequence, it holds that
\begin{align}
\frac{d\tau}{dt}=\big(1-\epsilon\big)^{\frac{1}{2}},
\label{dtaudt}
\end{align}
where the function $\epsilon(t)$ is
\begin{align}
{\epsilon}(\tau):=\,\frac{\eta(D_{x'}x',D_{x'}x')}{A^2_{max}}.
\label{covariantdefiniciondeepsilon}
\end{align}
 In particular, in a Fermi coordinate system for $\eta$ along $x:I\to M$ the function
$\epsilon(t)$ is
\begin{align*}
{\epsilon}(t)=\,\frac{ \eta(x''(t),x''(t))}{A^2_{max}}.
\end{align*}
\subsection{Recovering the Lorentzian structure from a generalized metric}
We have assumed the existence of a Lorentzian metric $\eta$ from where the metric of maximal acceleration is constructed. However, from the point of view discussed in {\it section 2}, the metric of maximal acceleration $g$ should precede
logically to the Lorentzian metric $\eta$. In this {\it section} we introduce the metric of maximal acceleration as a generalized tensor \cite{Ricardo012}. Then we show how an effective theory can be constructed such that a Lorentzian metric $\eta$ is obtained in the limit $A^2_{max}\to +\infty$ from the fundamental metric of maximal acceleration $g$.

The notion {\it generalized higher order tensor} is of fundamental importance for our developments.  In particular, a generalized metric is determined by a map that associates to each {\it physical world line} $x:I\to M$ a smooth family of scalar products
\begin{align*}
\{g(\,^2x(t)):T_{x(t)}M\times T_{x(t)}M\to \mathbb{R},\,t\in I\}
\end{align*}
along the world line $x:I\to M$ whose components live on
the second jet lift $^2x:I \to J^2_0(M)$. Hence the components of $g$ depend upon
the position $x(t)$, the velocity vector ${x}'(t)$ and the acceleration ${x}''(t)$ of the curve $x:I\to M$. Moreover, $g$ is constrained to be symmetric, non-degenerate and to have Lorentzian signature $(-1,1,1,1)$ (in the sense that each scalar product defined by
$g(\,^2x(t))$ at each tangent space $T_{x(t)}M$.

The $T^{(p,q)}(M,J^1_0(M))$ is the bundle over $M$ of multi-linear maps with values on $J^1_0(M)$,
\begin{align*}
T:T^*M\times\cdot\cdot\cdot \times T^*M\times TM\times\cdot\cdot\cdot \times TM\to J^1_0(M).
\end{align*}
Then it will useful to consider the following notion,
\begin{definicion}
A generalized Finsler spacetime $(M,\bar{g})$ is smooth $4$-manifold $M$ and $\bar{g} \in \Gamma \,T^{(0,2)}(M,J^1_0(M))$ such that $\bar{g}$ is non-degenerate with Lorentzian signature, bilinear and symmetric.
\label{generalizedFinslermetric}
\end{definicion}
Let $g$ be the metric of maximal acceleration. It can be formally expressed in a general way as
 \begin{align}
 g(\,^2x)=\,g^0(\,x,x',x'')+\,g^1(x,x',x'')\xi(x,x',x'',A^2_{max}).
  \label{perturabativeexpansion}
 \end{align}
 However, the expression \eqref{perturabativeexpansion} determines a family of generalized metrics $G(A_{max})=\{g(A_{max},\,A_{max}\in (0,+\infty)\}$ parameterized by the value of the maximal acceleration $A_{max}$. Thus we can consider limits when $A_{max}\to \infty$ in the family of metrics $G(A_{max})$.

We require that metric obtained by the limit
\begin{align*}
\lim_{A^2_{max}\to +\infty}\,g(\,^2x)
\end{align*}
 to be compatible with the clock hypothesis. Therefore, $\lim_{A^2_{max}\to+\infty}\,g(\,^2x)$ must be a generalized Finsler metric as in {\it Definition} \ref{generalizedFinslermetric}. Moreover, we assume that $\xi(x,x',x'',A^2_{max})$ is analytical in $1/A^2_{max}$ and has the form
\begin{align}
\xi(x,x',x'',A^2_{max})=\,\sum^{+\infty}_{n=1}\,\xi_n(x,x',x'')\,\left( \frac{1}{A^2_{max}}\right)^n.
\label{generaltypeofcorrections}
\end{align}
Then one can argue that
\begin{align*}
\lim_{A^2_{max}\to\infty}\,g(\,^2x)=\,g^0(x,x',x'')
\end{align*}
and by compatibility with the clock hypothesis,
\begin{align*}
g^0(x,x',x'')=g^0(x,x').
\end{align*}
Moreover,  the metric $g^0(x,x')$ is non-degenerate, since $g$ is non-degenerate. $g^0$ is also symmetric and bilinear. Therefore, $g^0$ is indeed a generalized Finsler metric.
In the particular case that we adopt the further assumption that $g^0$ is Lorentzian, we recover the expression \eqref{maximalaccelerationmetric} with the identifications
\begin{align*}
g^0 =\,\eta,\quad  g^1 =\,\eta,\quad \xi_1=\eta(x'',x'').
\end{align*}
Hence the metric of maximal acceleration \eqref{perturabativeexpansion}} can be expressed as
\begin{align}
g_{\mu\nu}(\,^2x):=\,\left[1+ \frac{g^0(D_{x'}x'(t),
D_{x'}x'(t))}{g^0(x',x')\,A^2 _{max}}\right]g^0_{\mu\nu},\quad \mu,\nu=1,...,4,
\label{maximalaccelerationmetric2}
\end{align}
where the curve $x:I\to M$ is parameterized by the proper time parameter of $g^0$.

 We can now introduce the following notion,
\begin{definicion}
A spacetime of maximal acceleration is a pair $(M,g)$ where $M$ is a $4$-dimensional manifold and $g$ is a generalized metric tensor where the maximal acceleration is \eqref{maximalaccelerationmetric2}, with $g^0=\,\lim_{A^2\to +\infty} g$.
\end{definicion}
 Indeed, we observe that the assumption of analyticity of $g$ in terms of $1/A^2_{max}$, uniqueness of the limit $\lim_{A^2\to +\infty}$ and the fact that the limit is compatible with the clock hypothesis fix completely the family of generalized metrics $G(A_{max})$ at first order in $1/A^2_{max}$.
\subsection{Definition of the domain of applicability of the effective theory}
 An objective criteria to decide when the theory proposed is applicable, that is, when an acceleration is small compared with $A_{max}$, is required. Note that by hypothesis, $g(x',x')$ is associated to proper times measured experimentally, while  $g_0(x',x')=\,-1$ holds good, by definition of the parameterization of the timelike world lines. The first correction to the Lorentzian geometry, that corresponds to  $\frac{\eta(x'',x'')}{A^2_{max}}$, must be determined by the given particular theory. For instance, in classical electrodynamics the acceleration function $\eta(x'',x'')$ of  a point charged particle is given in the approximation that neglects radiation-reaction effects by the Lorentz force equation, while $A_{max}$ can be fixed by additional conditions as compatibility with covariant loop quantum gravity. Such compatibility  provides an universal maximal acceleration of order $10^{50}m/s^2$. 
 
 The criteria that we propose for $x''$ being an {\it small acceleration} (that often we will denote as $\epsilon \ll 1$) is that
 \begin{align}
 \psi(\xi_1)=\,\left(\frac{\big(g-g_0-\frac{1}{A^2_{max}}\xi_1\big)(x',x')}{\big(g-g_0\big)(x',x')}\right) =0,
 \label{mathematicalcriteriaofapplicability}
 \end{align}
where $\xi_1$ is defined by the expression \eqref{generaltypeofcorrections} and $x'\in \,T_xM$ such that $(g-g_0\big)(x',x')\neq 0$. Note that one indeed has identically
\begin{align*}
\frac{\left(g-g_0-\frac{1}{A^2_{max}}\xi_1\right)(x',x')}{\big(g-g_0\big)(x',x')}|_{\xi_1=0}=0.
 \end{align*}
 Therefore, there is an open set $[0,\epsilon_0)\subset [0,1]$ where
 the condition \eqref{mathematicalcriteriaofapplicability} is interpreted as  that higher order terms in \eqref{generaltypeofcorrections} are negligible, within the respective experimental errors in a given particular framework to test the theory.
 Hence the applicability of the theory is reduced, for each particular theory with maximal acceleration, to experimental criteria, that should  provide the scale to decide when $\psi$ is zero. By adopting such experimental criteria, the predictions of a theory with maximal acceleration can in principle be falsified.

\section{Causal structure of the metric of maximal acceleration}
\begin{definicion}
A vector field $Z$ along $x:I\to M$ is timelike (respectively spacelike or lightlike) if $g(Z,Z)<0$ (respectively $g(Z,Z)>0$ or
$g(Z,Z)=0$) when evaluated along the lift $^2x:I\to M$.
A curve is timelike (respectively, spacelike or null) if the velocity tangent vector is timelike (respectively, spacelike
or null). The null sector of $g$ is the collection of all  curves that are lightlike. The timelike sector of $g$ is the
collection of all curves that are timelike.
\end{definicion}
There is the corresponding notions of timelike, spacelike and lightlike vectors at a point $x\in M$, defined in an obvious way.

 In the domain of small acceleration compare with the maximal acceleration the following holds,
 \begin{proposicion}
The {\it null sector} $NC$ of a metric of maximal acceleration \eqref{maximalaccelerationmetric2} is composed by the following curves,
\begin{enumerate}
\item Curves such that $g^0({x}',{x}')=0$,
\item Curves of maximal proper acceleration.
\end{enumerate}
\label{relationnullconditions}
\end{proposicion}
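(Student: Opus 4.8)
The plan is to exploit the fact that, by \eqref{maximalaccelerationmetric2}, the metric of maximal acceleration is conformally related to $g^0$ along each world line. Writing
\[
g_{\mu\nu}(\,^2x) = \Omega(\,^2x)\,g^0_{\mu\nu}, \qquad \Omega(\,^2x) := 1 + \frac{g^0(D_{x'}x',D_{x'}x')}{g^0(x',x')\,A^2_{max}},
\]
one has $g(Z,Z) = \Omega(\,^2x)\,g^0(Z,Z)$ for every vector field $Z$ along $x$, so that $x$ is lightlike for $g$ exactly when $\Omega(\,^2x)\,g^0(x',x') = 0$. I would then split the discussion according to the two ways this product can vanish: either $g^0(x',x') = 0$, or $\Omega(\,^2x) = 0$ with $g^0(x',x') \neq 0$.

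In the first case $x$ is a null curve of $g^0$, and since $g$ and $g^0$ share the same null cone jet-point by jet-point, $g(x',x') = \Omega(\,^2x)\,g^0(x',x') = 0$, whence $x \in NC$; this recovers item (1). In the second case one parametrizes $x$ by the proper time of $g^0$, so that $g^0(x',x') = -1$ on the $g^0$-timelike sector and therefore $\Omega = 1 - \epsilon$ with $\epsilon$ as in \eqref{covariantdefiniciondeepsilon} (consistently with \eqref{propertimeg}--\eqref{dtaudt}), giving $g(x',x') = \epsilon - 1$. Hence $x \in NC$ if and only if $\epsilon = 1$, i.e. $g^0(D_{x'}x',D_{x'}x') = A^2_{max}$, which is precisely the defining relation \eqref{curveofmaximalacceleration} of a curve of maximal proper acceleration, recovering item (2). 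On the $g^0$-spacelike sector one has instead $\Omega = 1 + g^0(D_{x'}x',D_{x'}x')/A^2_{max}$, which under the standing hypothesis \eqref{spacelikeaccelerations} that proper accelerations are spacelike is strictly positive, so no further lightlike curves occur there; combining the cases shows $NC$ is exhausted by the two listed families.

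The step I expect to be the main obstacle is the degenerate case $g^0(x',x') = 0$: there the conformal factor $\Omega$ is formally of the indeterminate type $1 + (\text{something})/0$, because the formula \eqref{maximalaccelerationmetric2} is written with respect to the $g^0$-proper-time parametrization, which does not exist for $g^0$-null curves. I would handle this either by a continuity argument, realizing such a curve as a limit of timelike curves and passing to the limit in $g(x',x')$, or by reverting to the parametrization-covariant form of the defining expression and observing directly that $g^0(x',x') = 0$ annihilates $g(x',x')$ regardless of the (finite) value of the conformal factor. A secondary point worth stating explicitly is that the hypothesis \eqref{spacelikeaccelerations} (equivalently, restriction to the domain where $\Omega>0$ away from the maximal-acceleration locus) is exactly what rules out spurious lightlike curves carrying timelike $g^0$-acceleration, so that the two families in the statement are genuinely exhaustive.
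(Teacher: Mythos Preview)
Your proof is correct and follows the same line as the paper's, which simply states that the result is ``direct from equation \eqref{maximalaccelerationmetric2}''. Your elaboration of the factorization $g(x',x')=\Omega(\,^2x)\,g^0(x',x')$ and your explicit treatment of the degenerate $g^0$-null case are in fact more careful than the paper's one-line argument.
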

\begin{proof} This is direct from equation \eqref{maximalaccelerationmetric2}. \end{proof}
For curves that are far from the domain of maximal proper acceleration $\epsilon(t)\ll1$. Hence the null structure of $g$ coincides with the null
structure of $\eta$. Analogously, for timelike curves we have that $g({x}',{x}')<0\,\,\Longleftrightarrow\,\,
\eta({x}',{x}')<0.$ Thus we arrive to the conclusion that in the region where $\epsilon\ll1$, the notions of
lightlike, timelike and spacelike curves for $g$ coincide with the analogous notions for $\eta$. Therefore,
 \begin{proposicion}
If  $\epsilon(t)\ll1$ holds, the set of null vectors of $g$  at $x\in M$ is a cone of $T_xM$. Moreover, it is the boundary of the timelike vectors
respect to $g$.
 \end{proposicion}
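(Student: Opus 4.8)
The plan is to exploit the fact that, once the world line is parameterized by the proper time of $g^{0}$, the metric of maximal acceleration is \emph{pointwise} a conformal rescaling of $g^{0}=\eta$ by the factor $1-\epsilon(t)$, and that this factor is strictly positive precisely in the regime $\epsilon(t)\ll 1$. Concretely, I would first fix a physical world line $x:I\to M$ and a parameter value $t$, so that $g(\,^{2}x(t))$ is a genuine symmetric bilinear form on the single tangent space $T_{x(t)}M$. By \eqref{maximalaccelerationmetric2}, together with the normalization $g^{0}(x',x')=-1$ used to fix the parameterization and the definition \eqref{covariantdefiniciondeepsilon} of $\epsilon$, this form equals $(1-\epsilon(t))\,\eta_{\mu\nu}\,dx^{\mu}\otimes dx^{\nu}$ evaluated at $x(t)$; here $g^{0}=\eta$ in the Lorentzian case under consideration.

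Second, I would record the elementary observation that $\epsilon(t)\ll 1$ forces $1-\epsilon(t)>0$, so that $g(\,^{2}x(t))$ and $\eta$ at $x(t)$ are related by a \emph{positive} conformal factor. From this, for every $Z\in T_{x(t)}M$ one has $g(Z,Z)=(1-\epsilon(t))\,\eta(Z,Z)$, hence the equivalences $g(Z,Z)=0\iff\eta(Z,Z)=0$ and $g(Z,Z)<0\iff\eta(Z,Z)<0$; i.e. the null set and the timelike set of $g$ at $x(t)$ coincide, as subsets of $T_{x(t)}M$, with those of $\eta$. This is the vector-level counterpart of Proposition \ref{relationnullconditions} restricted to the region away from curves of maximal acceleration: the second family there does not arise because $\epsilon<1$ excludes $\eta(D_{x'}x',D_{x'}x')=A^{2}_{max}$.

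Third, I would invoke the standard structure of a Lorentzian scalar product on $T_{x(t)}M$: its null set is invariant under multiplication by any real scalar, since $\eta(\lambda Z,\lambda Z)=\lambda^{2}\eta(Z,Z)$, hence is a cone; and this cone is the topological boundary of the open set $\{Z:\eta(Z,Z)<0\}$ of timelike vectors — a classical fact, visible by writing $\eta$ in an orthonormal frame, where the timelike region is the interior of the double light cone and its complement's interior is the spacelike region, so that closure minus interior of the timelike set is exactly the null cone. Transporting both properties through the positive conformal factor $1-\epsilon(t)$ yields that the null set of $g$ at $x$ is a cone of $T_{x}M$ and that it is the boundary of the timelike vectors of $g$, which is the assertion.

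I do not expect a genuine technical obstacle; the only point requiring care is conceptual, namely to make precise that the \emph{generalized} metric $g$, whose coefficients depend on the $2$-jet $^{2}x$, nonetheless restricts, for each fixed jet datum, to an honest symmetric bilinear form on a single tangent space, so that the pointwise conformal comparison with $\eta$ and the usual Lorentzian cone picture apply verbatim. One should also flag explicitly that the argument uses only $\epsilon<1$ (guaranteed by $\epsilon\ll 1$) and nothing finer about the size of $\epsilon$: the strict inequality is what keeps the conformal factor nondegenerate, which is precisely the exclusion of curves of maximal acceleration — the second family in Proposition \ref{relationnullconditions} — along which $1-\epsilon$ vanishes and the cone description would break down.
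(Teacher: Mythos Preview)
Your proposal is correct and follows essentially the same approach as the paper: the paper's argument is the short paragraph immediately preceding the proposition, which observes that for $\epsilon(t)\ll 1$ the null and timelike structures of $g$ coincide with those of $\eta$ (because the conformal factor $1-\epsilon$ is positive), and then imports the standard Lorentzian cone picture. Your write-up is simply a more explicit and careful version of that same reasoning, including the useful remark that only $\epsilon<1$ is actually used.
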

\begin{definicion}
 A spacetime $(M,g)$ is time oriented if there is a timelike vector field $W\in \,\Gamma TM$ such that at each point $x\in\,M$ and for each integral curve $x_W:I\to M$ of $W$ with initial condition $x_W(0)=x$, the vector field $W$ is timelike along
$x_W:I\to M$. Then $W$ is a time orientation.
\end{definicion}
 Given a time orientation $W$, a future pointing timelike vector $Z$ is such that for any of its  integral
 curves  $x_Z:I\to M$ and with $W:I\to T{x_Z} M$ the restriction  of $W$ along the curve $x_Z$, then the relation
\begin{align}
g(W,Z):=g_{\mu\nu}(\,^2x_Z)\,Z^\mu\,W^\nu\,<0
\label{timeorientedz}
\end{align}
holds.
Similarly, a past pointed timelike vector $Z$ is such that
\begin{align}
g(W,Z):=g_{\mu\nu}(\,^2x_Z)\,Z^\mu\,W^\nu\,>0.
\label{pastpointedz}
\end{align}
 A timelike curve $x:I\to M$ is future pointing if the tangent vector is future pointing respect to $W$.
 In a similar way, a curve $x:I\to M$ is past pointing if the tangent velocity field is past pointing. These notions are extended to lightlike vectors in the natural way.
\begin{proposicion}
Let $x:I\to M$ be a curve such that $\epsilon(t)\ll1$. Then
 \begin{itemize}
\item Any time orientation $W\in \,\Gamma TM$ of $g$ is a time orientation of $\eta$,

\item If $Z$ is future pointed (past-pointed) respect to $g$ and $W$, then it is  future pointed (past-pointed) respect to $\eta$ and W.
\end{itemize}
\end{proposicion}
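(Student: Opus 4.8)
The plan is to exploit a fact already implicit in \eqref{maximalaccelerationmetric2}: along any curve with $\epsilon(t)\ll1$ the metric of maximal acceleration is a \emph{pointwise positive rescaling} of $\eta$. Concretely, for such a curve $x:I\to M$ (parameterized by the proper time of $\eta$, so that $\eta(x',x')=-1$) and for every pair of vectors $V,U\in T_{x(t)}M$ one has $g(\,^2x(t))(V,U)=\Omega(t)\,\eta(V,U)$ with $\Omega(t)=1-\epsilon(t)$, and $\Omega(t)>0$ precisely because $\epsilon(t)\ll1$. Everything then reduces to the elementary observation that multiplying a bilinear form by a strictly positive scalar does not change the sign of any of its values.

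First I would treat the time-orientation statement. Let $W\in\Gamma TM$ be a time orientation of $g$ and let $x_W:I\to M$ be an arbitrary integral curve of $W$; in the regime we are working in, $\epsilon(t)\ll1$ holds along $x_W$, so $g(\,^2x_W(t))(W,W)=\Omega(t)\,\eta(W,W)$ with $\Omega(t)>0$. Since $W$ is timelike for $g$ along $x_W$ we have $g(W,W)<0$, hence $\eta(W,W)<0$ along $x_W$, and since $\eta$ is a genuine Lorentzian metric this is exactly the assertion that $W$ is timelike at every point, i.e.\ that $W$ is a time orientation of $\eta$. (Alternatively one may invoke the preceding proposition, which already identifies the timelike cones of $g$ and $\eta$ when $\epsilon\ll1$.)

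Next I would handle the future/past-pointing statement. Suppose $Z$ is future pointed with respect to $g$ and $W$. Take any integral curve $x_Z:I\to M$ of $Z$, restrict $W$ along it, and use the defining inequality \eqref{timeorientedz}: $g_{\mu\nu}(\,^2x_Z)\,Z^\mu W^\nu<0$. Substituting $g_{\mu\nu}(\,^2x_Z)=\Omega(t)\,\eta_{\mu\nu}$ with $\Omega(t)>0$ (again using $\epsilon(t)\ll1$ along $x_Z$) gives $\Omega(t)\,\eta(W,Z)<0$, and dividing by the positive factor $\Omega(t)$ yields $\eta(W,Z)<0$, which is precisely the condition for $Z$ to be future pointed with respect to $\eta$ and $W$. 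The past-pointing case is identical with all inequalities reversed, using \eqref{pastpointedz}, and the extension to lightlike $Z$ is verbatim the same since only the sign of $g(W,Z)$ (equivalently of $\eta(W,Z)$) is ever used.

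The one point that genuinely needs care — the ``main obstacle'', modest as it is — is twofold: establishing that the conformal factor $\Omega(t)=1-\epsilon(t)$ is strictly positive (this is exactly what the hypothesis $\epsilon(t)\ll1$ buys us), and being explicit about the domain over which $\epsilon(t)\ll1$ is assumed. Because time orientation and future/past pointing are defined by quantifying over integral curves of $W$ and of $Z$, the hypothesis must be read as holding along all those curves within the region under consideration, that is, inside the domain of validity of the effective theory fixed in the previous subsection. Once that is granted, each half of the proposition is a single-line sign argument.
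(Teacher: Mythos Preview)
Your proof is correct and follows the same approach the paper itself relies on: the paper does not supply an explicit proof of this proposition, but the argument is implicit in the discussion immediately preceding it, where it is observed that for $\epsilon(t)\ll 1$ one has $g(x',x')<0\Longleftrightarrow\eta(x',x')<0$ because $g=(1-\epsilon)\eta$ with positive conformal factor. You have simply made that one-line sign argument explicit for both the time-orientation and the future/past-pointing clauses, which is exactly what is needed.
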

An  observer is a smooth, future pointed, timelike world line $\mathcal{O}:J\to M,\,J\subset\,\mathbb{R}$. We will denote an observer simply by $\mathcal{O}$.

\section{Radar distance and four-velocity in a geometry of maximal acceleration}

 Let $(M,g)$ be a spacetime of maximal acceleration, $O$ an observer  and $q\in \,M$. We
 define the distance between $q$ and the observer $O$ as follows. Let the observer $O$ at the spacetime point $p$ to send a light ray signal and  when  the ray reaches the point $q$, it is reflected back to the point $p'$ on the world line of the observer.  The radar distance $d(O,q)$ between the observer $O$ and the point $q$
 is defined as one half times the speed of light in the vacuum $c$ multiplied by the elapsed proper time $\tau_{pp'}$ measured by the observer $O$,
 \begin{align}
 d(O,q)=\frac{1}{2}\,c|\tau(p)-\tau(p')|.
 \end{align}
This procedure is consistent, since in spacetimes of maximal acceleration the speed of light is maximum and by the principle of relativity,  independent of the source and the same for all the observers.
 \begin{definicion}
 The {\it radar distance} between two points $p,q\in M$ measured by an observer $O:I\to M$ is defined by the expression
 \begin{align}
 d(p,q)=\,|d(O,p)-d(O,q)|.
 \label{Euclideandistanceformula}
  \end{align}
  \label{Euclideandistancedefinition}
\end{definicion}
 Despite the name, the expression \eqref{Euclideandistanceformula} does not determine a distance function on $M$, since there are points where the function may not be defined. However, given an observer, it determines a distance function for all the points that are light connected with $O$.

We can provide a close expression for the radar distance $d(p,q)$. For this, let us consider $d(O,p)=\frac{1}{2}\,c|\tau_2(p)-\tau_1(p)|$ and $d(O,q)=\frac{1}{2}\,c|\tau_2(q)-\tau_1(q)|$, where $\tau_2$ are proper arrival times and $\tau_1$ are proper departure times for ray lights with origin in the world line of the observer $O$ or arriving to $O$. Let us also assume that by convention the departure times are the same, $\tau_1(p)=\,\tau_1(q)$. Then one obtains easily that
 \begin{align*}
 d(p,q)=\,\frac{1}{2}\,c \left| \tau_2(q)-\tau_2(p)\right|.
 \end{align*}
 Since the proper time is measured by a metric of maximal acceleration, we have the expression
 \begin{align*}
  d(p,q)= &\,\frac{1}{2}\,c \Big|\int^{t_2(p)}_{t_0}\,\sqrt{\left(1-\frac{\eta(D_{O'}O',D_{O'}O')}{A^2_{max}}\right)}\sqrt{-\eta(O',O')}\,dt\\
  & -  \int^{t_2(q)}_{t_0}\,\sqrt{\left(1-\frac{\eta(D_{O'}O',D_{O'}O')}{A^2_{max}}\right)}\sqrt{-\eta(O',O')}\,dt\Big|\\
   &=\,\frac{1}{2}\,c \Big|\int^{t_2(p)}_{t_0}\,\sqrt{\left(1-\frac{\eta(D_{O'}O',D_{O'}O')}{A^2_{max}}\right)}\,dt\\
    & -  \int^{t_2(q)}_{t_0}\,\sqrt{\left(1-\frac{\eta(D_{O'}O',D_{O'}O')}{A^2_{max}}\right)}\,dt\Big|,
 \end{align*}
 where we have used that $t$ is the proper time of $\eta$ and that $\eta(O',O')=-1$.
Hence the radar distance between $p$ and $q$ can be written as
 \begin{align}
  d(p,q)= \,\frac{1}{2}\,c \Big|\int^{t_2(p)}_{t_2(q)}\,\sqrt{\left(1-\frac{\eta(D_{O'}O',D_{O'}O')}{A^2_{max}}\right)}\,dt\Big|.
  \label{generalformradardistance}
 \end{align}

 We further assume that for any macroscopic observer $O$, the acceleration contribution to the proper time is completely negligible. This is justified by experience in the case of small acceleration for observers that are in weak gravitational fields, following the principle of equivalence. Note that this does not implies that other particles have acceleration larger than $O$. Hence in the expression \eqref{generalformradardistance} one can make  the approximation $\big(1-\frac{\eta(D_{O'}O',D_{O'}O')}{A^2_{max}}\big)\approx 1.$ In this case, the radar distance is the Lorentzian radar distance,
 \begin{align}
 d_L(p,q)=\,\frac{1}{2}\,c \Big|\int^{t_2(p)}_{t_2(q)}\,dt\Big|.
 \label{relativisticradardistance}
 \end{align}
 Note that the Lorentizan radar distance is only defined in an open neighborhood of the observer \cite{Perlick2008}. Hence one needs to restrict to the compatible definition domains of $d$ and $d_R$, in order to apply the approximation.
From now on we will adopt the approximation $d(p,q)\to d_L(p,q)$ as valid.
\subsection*{Notions of celerity and four-velocity in a geometry of maximal acceleration}
The acceleration square function is defined in Fermi coordinates associated with the world line $x:I\to M$ by the expression
\begin{align*}
a^2(t):=\,\eta({x}'',{x}'').
\end{align*}
Here the parameter is such that $\eta(x',x')=-1$.
Note that we could equally parameterize by the proper time $\tau$ of g. However, it is convenient to use $t$ instead of $\tau$, since then we can we can apply the relativistic formulae directly. Also note that 
for curves far from the region of maximal proper acceleration the relation
$a^2(t)\ll\, A^2_{max}$ holds good.
\begin{definicion}
Let $(M,g)$ be a spacetime of maximal acceleration  and $x:I\to M$ a timelike curve.
Then the {\it celerity function} along the world line $x:I\to M$ is
\begin{align}
v(t):=\,\lim_{\Delta\to 0}\frac{d_L(x(t+\Delta),x(t))}{\int^{t+\Delta}_{t}\,\sqrt{-g(x',x')}\,d\tilde{t}}.
\label{limitdefinitionofcelerity2}
\end{align}
\end{definicion}

If the functions  $a^2(t)$ is $\mathcal{C}^2$, then by Taylor's theorem it holds that
 \begin{align*}
 & a^2(\tilde{t})=\,a^2(t)+\,\tilde{\Delta}\,\frac{d\,a^2(s)}{ds}\big|_{s=\,\tilde{t}}+\,\mathcal{O}(\tilde{\Delta}^2),
 \end{align*}
 with $\tilde{t}-t=\tilde{\Delta}$ and $0< \tilde{\Delta}\leq \,\Delta.$
 Now it is easy to consider the limit $\Delta\to 0$, that in particular also implies that we need to consider the limit $\tilde{\Delta}\to 0$ when making the substitutions of the Taylor's expansions in the expression for \eqref{limitdefinitionofcelerity2}. In  Fermi coordinate systems associated to the world line $x:I\to M$ we have
 \begin{align*}
 v(t)& =\,\lim_{\Delta\to 0}\frac{d_L(x(t+\Delta),x(t))}{\int^{t+\Delta}_{t}\,\sqrt{\left(1-\frac{a^2(\tilde{t})}
 {A^2}\right)\big(-\eta(x',x')\big)}\,d\tilde{t}}\\
 & =\,\lim_{\Delta\to 0}\frac{d_L(x(t+\Delta),x(t))}{\int^{t+\Delta}_{t}\,\sqrt{\left(1-\frac{a^2(t)+\,
 \tilde{\Delta}\,\frac{d\,a^2(s)}{ds}\big|_{s=\,\tilde{t}}+\,\mathcal{O}(\tilde{\Delta}^2)}{A^2_{max}}\right)\,\big(-\eta(x',x')\big)}\,d\tilde{t}}.
 \end{align*}
 Disregarding higher order contributions in $\tilde{\Delta}$, the celerity function can be expressed as
\begin{align}
v(t):=\,\frac{1}{\sqrt{1-\frac{a^2(t)}{A^2_{max}}}}\,\tilde{v}(t),
\label{vtildev}
\end{align}
where $\tilde{v}(t)$ is the celerity function determined by the Lorentzian metric $\eta$ in terms of the coordinate time $t$,
\begin{align*}
\tilde{v}(t):=\lim_{\tilde\Delta\to
0}\,\frac{d_L({x(t),x(t+\tilde\Delta)})}{\int^{t+\tilde\Delta}_{t}\,\sqrt{-\eta(x',x')}\,d\tilde{t}}.
\end{align*}
Hence in Fermi coordinate systems associated to the world line $x:I\to M$ the relation
\begin{align}
v(t)\geq \,\tilde{v}(t)
\end{align}
holds good. Therefore, as happens with $\tilde{v}$ in the special theory of relativity, the celerity $v$ is not bounded from above.

Similarly, the components of the four-velocity are defined in a Fermi coordinate system by the limit
\begin{align*}
v^\mu(t)=\,\lim_{\Delta\to 0}\frac{x^\mu(t+\Delta)-x^\mu(t)}{\int^{t+\Delta}_{t}\,\sqrt{-g(x',x')}\,d\tilde{t}}.
\end{align*}
In a similar way as it was taken the limit in the definition of celerity, disregarding higher order terms in $\tilde{\Delta}$  and adopting Fermi coordinates, one obtains that the four-velocity is given by
\begin{align}
v^\mu(t)=\,\frac{1}{\sqrt{1-\frac{a^2(t)}{A^2_{max}}}}\,\tilde{v}^\mu(t),\quad \mu=1,...,4,
\label{fourvelocity}
\end{align}
where $\tilde{v}^\mu(t)$ is the four-velocity associated to $\eta$,
 \begin{align*}
\tilde{v}^\mu(t)=\,\lim_{\Delta\to 0}\frac{x^\mu(t+\Delta)-x^\mu(t)}{\int^{t+\Delta}_{t}\,\sqrt{-\eta(x',x')}\,d\tilde{t}}=\,\lim_{\Delta\to 0}\frac{x^\mu(t+\Delta)-x^\mu(t)}{\Delta}.
\end{align*}
 The components $v^\mu(t)$  defined the expression \eqref{fourvelocity} determine the four-velocity. It is direct that the four-velocity is a contra-variant four-vector.
\subsection*{The case when $\eta$ is the Minkowski metric}
If the metric $\eta$ is the Minkowski metric $h=diag\,(-1,1,1,1)$, in any Fermi coordinate system
the relativistic four-velocity $\tilde{v}^\mu(t)$ is related with the coordinate velocity vector $\vec{\bf {v}}$ by the
expression
\begin{align*}
 \tilde{v}^0=\,\frac{1}{\sqrt{1-\frac{\vec{\bf {v}}^2(t)}{c^2}}}\,c,\quad
\vec{\tilde{v}}(t)=\,\frac{1}{\sqrt{1-\frac{\vec{\bf {v}}^2(t)}{c^2}}}\,\vec{\bf {v}}(t).
\end{align*}
Then  from the expression \eqref{fourvelocity}, one has the following relations for the components of the four-velocity $v^\mu(t)$,
\begin{align}
v^0(t)=\,\frac{1}{\sqrt{1-\frac{a^2(t)}{A^2_{max}}}}\frac{1}{\sqrt{1-\frac{\vec{\bf {v}}^2(t)}{c^2}}}\,c,
\label{componentv0}
\end{align}
\begin{align}
\vec{v}(t)=\,\frac{1}{\sqrt{1-\frac{a^2(t)}{A^2_{max}}}}\frac{1}{\sqrt{1-\frac{\vec{\bf {v}}^2(t)}{c^2}}}\,\vec{\bf
{v}}(t),
\label{componentv123}
\end{align}
from which follows that the components $(v^0,\vec{v})$ transform contravariantly under the action of the Lorentz group $O(1,3)$.

\section{The four-momentum  in  spacetimes of maximal acceleration}
\begin{definicion}
Let $(M,g)$ be a spacetime of maximal acceleration and $O$ an observer. Then the four-momentum  of a point particle with mass $m$ and world line
$x:I\to M$ observed by $O$ is defined by the components
\begin{align}
P^\mu(t)=\,m\,v^\mu( t),\quad \mu=1,2,3,4,
\label{fourmomentum}
\end{align}
where $v^\mu(t)$ is the velocity measured by $O$.
\end{definicion}
In the case when $\eta$ is the Minkowski metric $h$ there are inertial coordinate systems. The components respect to an inertial coordinate system of the celerity $four$-vector $(v^0,\vec{v})$ are given by
\eqref{componentv0} and \eqref{componentv123}. Then the components of the four-momentum are
 \begin{align}
c\,P^0(t)=E(t)=\,\frac{1}{\sqrt{1-\frac{a^2(t)}{A^2_{max}}}}\,\frac{1}{\sqrt{1-\frac{\vec{\bf {v}}^2}{c^2}}}\,mc^2,
\label{modifiedE}
\end{align}
\begin{align}
\vec{\bf P}(t)=\,\frac{1}{\sqrt{1-\frac{a^2(t)}{A^2_{max}}}}\,\frac{1}{\sqrt{1-\frac{\vec{\bf {v}}^2}{c^2}}}\,m\vec{\bf {v}}.
\label{threemoment}
\end{align}
Note that the relation \eqref{modifiedE} differs from the corresponding equation (1) in ref.  \cite{CaianielloLandi}. Moreover, for a point particle which is being accelerated by an external field, the four-momentum \eqref{fourmomentum} is not preserved.

The  dispersion relation for a  point particle associated to the Minkowski metric $h$ reads directly from
\eqref{threemoment} and \eqref{modifiedE}  as
\begin{align}
-E^2+c^2 \vec{\bf P}^2=\,-m^2\,c^4\,\left(\frac{1}{1-\frac{a^2(t)}{A^2_{max}}}\right).
\label{dispersionrelation1}
\end{align}
 However, the physical metric is by assumption the metric of maximal acceleration $g$, that leads to a relativistic dispersion relation,
\begin{align*}
g(P,P)& =g_{\mu\nu}(\,^2x)P^\mu P^\nu=\,\Big(1-\frac{\eta(x'',x'')}{A^2_{max}}\Big)\,h(P,P)+\,\mathcal{O}(\epsilon^2)\\
& =\Big(1-\frac{a^2(t)}{A^2_{max}}\Big)\,\left(\frac{1}{\sqrt{1-\frac{a^2(t)}{A^2_{max}}}}\,\frac{1}{\sqrt{1-\frac{\vec{\bf
{v}}^2}{c^2}}}\right)^2(-m^2c^2+m^2\vec{\bf v}^2)+\,\mathcal{O}(\epsilon^2)\\
& = \,-m^2c^2+\,\mathcal{O}(\epsilon^2).
\end{align*}
Therefore, at the level of approximation of the present theory, the physical dispersion relation is relativistic,
\begin{align}
-m^2c^4 =\,-E^2(t)\,+c^2\,\vec{\bf P}^2(t).
\label{relativisticdispersionrelation}
\end{align}
 Except by the open question on the uniqueness of the geometric formalism discussed in this paper, the argument showed above, that for theories that contain a maximal acceleration and a maximal speed, the dispersion relation for point particles should be the relativistic dispersion relation. In particular, this result applies to Caldirola's theory \cite{Caldirola1981} but also to theories that can be seen as effective theories from quantum gravity theories and string field theories, at least in the domain $a^2(t)\ll A^2_{max}$.

\subsection*{Modification of the Einstein energy-mass relation} Let $(M,g)$ be a spacetime of maximal acceleration  such that $\eta$ is the Minkowski metric $h$. It follows
from \eqref{modifiedE} that for an inertial observer instantaneously at rest with a particle of proper acceleration $a(t)$, since $\vec{v}(t)=0$, the energy of the particle measured by the observer is
\begin{align}
E(t)=\,\frac{1}{\sqrt{1-\frac{a^2(t)}{A^2_{max}}}}\,mc^2.
\label{energyatrest}
\end{align}
For $\frac{a^2(t)}{A^2_{max}}\ll1$, the relation \eqref{energyatrest} can be re-written
\begin{align}
E(t)-mc^2=\,\frac{1}{2}\,\frac{a^2(t)}{A^2_{max}}\,mc^2.
\label{approximateenergyatrest}
\end{align}
Both expressions for the difference $E-mc^2$ at the instantaneous rest frame indicate that the reservoir of energy that an accelerated particle can use to make any type of mechanical work or to exchange with other particles or fields is lower bounded by  the relativistic energy.

There are several scenarios where the expression \eqref{approximateenergyatrest} can be experimentally tested.
Let us consider an electric field interacting with a point charged particle. If we assume that the Lorentz force is approximately valid, we have  $a^2=\frac{q^2}{m^2}\,\vec{\mathcal{E}}^2.$ This implies the relation
\begin{align}
E-mc^2=\,\frac{1}{2}\,\frac{q^2}{m}\frac{\vec{\mathcal{E}}^2}{A^2_{max}}\,c^2.
\end{align}
If the charged particle has electric charge $Ne$ and mass $Nm_e$, being $(e,m_e)$ the charge and
mass of a single electron, then
\begin{align}
E-mc^2=\,\frac{1}{2}\,c^2\frac{e^2}{m_e}\frac{\vec{\mathcal{E}}^2}{A^2_{max}}\,N.
\label{relationqeea}
\end{align}
For $N$ large enough this relation can in principle be tested in particle accelerators. Considered each bunch of particles as a sole
charged particle, where $N$ can be of order $10^{10}$. Despite the relevance of space charge effects on each bunch and between different particle bunches, it is reasonable that the relation \eqref{relationqeea} can be tested, for $N$ large enough and for very intense electric fields, since for classical electrodynamics, the maximal proper acceleration $A_{max}$ is of order $10^{32}\,m/s^2$ for an electron (see for instance \cite{Caldirola1981, Ricardo012}).

As we mentioned  before, there are theories of quantum gravity where a proper maximal acceleration emerges. According to \cite{RovelliVidotto}, in the framework of covariant loop quantum gravity emerges a maximal acceleration $A_{max}=\,\sqrt{\frac{c^7}{8\pi\,G\,\hbar}}$ ($G$ is the Newton constant), which is of order $10^{50} m/s^2$. This value of $A_{max}$ is of the same order than the maximal proper acceleration predicted in several scenarios of string field theory \cite{Brandt1983, BowickGiddins, ParentaniPotting}. For these theories, if we accept the uniqueness of the geometric theory for metrics developed in this paper and under the constraints that $\eta$ is the Minkowski metric $h$, the relation \eqref{energyatrest} follows. Hence in the regime $\epsilon\ll1$,  the energy expected in string field theory and covariant loop quantum gravity should be bounded from below  by  the relativistic energy.  This result could have implications for the phenomenology of ultra-high cosmic rays, in particular, for corrections to the Greisen–-Zatsepin–-Kuzmin limit \cite{Greisen, ZatsepinKuzmin}.

\subsection*{Acknowledgements}Work was financially supported by PNPD-CAPES n. 2265/2011, Brazil. I would like to thank to G. Arciniega and S. Liberati for valuable comments.

\footnotesize{
}

\end{document}